\documentclass[aps,twocolumn,superscriptaddress,nofootinbib,a4paper,longbibliography]{revtex4-2}

\usepackage[latin9]{inputenc}
\usepackage{times}
\usepackage{graphicx}
\usepackage{amsfonts}
\usepackage{amsmath}
\usepackage{amssymb}
\usepackage{dsfont}
\usepackage{amsthm}
\usepackage{xcolor}
\usepackage[colorlinks=true,linkcolor=blue,citecolor=blue,urlcolor=blue]{hyperref}
\usepackage{comment}
\usepackage{braket}
\usepackage{physics}
\usepackage{multirow}
\usepackage{thmtools}
\usepackage{thm-restate}
\usepackage{cleveref}
\usepackage{enumitem}

\makeatletter

\theoremstyle{definition}
\newtheorem{theorem}{Theorem}
\newtheorem{lemma}[theorem]{Lemma}  

\newtheorem{proposition}[theorem]{Proposition}

\makeatother

\newcommand{\id}{\mathbf{1}}

\newcommand{\state}[2]{\langle #2 \rangle_{#1}}

\begin{document}

\title{Self-testing with untrusted random number generators}

\author{Mois\'es Bermejo Mor\'an}
\email{moises@hku.hk}
\affiliation{Department of Computer Science, School of Computing and Data Science, The University of Hong Kong, Pokfulam Road, Hong Kong, China}
\author{Ravishankar Ramanathan}
\email{ravi@cs.hku.hk}
\affiliation{Department of Computer Science, School of Computing and Data Science, The University of Hong Kong, Pokfulam Road, Hong Kong, China}

\begin{abstract}
Self-testing--the attractive possibility to infer the underlying physics of a quantum device in a black-box scenario--has gained increased traction in recent years, with applications to device-independent quantum information processing. Thus far, self-testing has been done under the assumption that the settings for the requisite Bell test are chosen freely and independently of the device tested in the experiment. That is, the random number generator used to generate the settings has been assumed to have no correlations with the device tested. Here, we extend self-testing protocols beyond the independence assumption. Surprisingly, we show that all pure bipartite partially entangled states can be self-tested provided that the random number generator obeys a residual randomness constraint strictly weaker than the independence assumption. This in itself provides a semi-device-independent certification of independence between the randomness source and the device.
\end{abstract}

\keywords{}
\maketitle

\textit{Introduction.-} 
The foundational phenomenon of Bell nonlocality has given rise to the possibility of device-independent quantum cryptography, specifically the possibility of obtaining secure key \cite{ABG07} and private, fully random bits for use in cryptographic applications \cite{PAM+10}. An attractive feature of these applications is the fact that the users do not need to trust their devices and can instead directly verify the security by means of simple statistical tests on the input-output behavior of the device, specifically checking the violation of a Bell inequality or the success probability in a nonlocal game. This violation certifies, even in a fully adversarial scenario where an eavesdropper has tampered with the device or holds additional systems correlated with the device, that the outputs of the device could not be perfectly predicted by the eavesdropper. 

A particularly important property of certain Bell inequalities is self-testing, which allows the users to fully determine the underlying quantum state of the device and the measurement performed (up to local isometries), based solely on the input-output behaviour. A famous example is the Clauser-Horne-Shimony-Holt (CHSH) inequality \cite{CHSH69}, whose maximal quantum value self-tests the state of the device to be an EPR pair $|\phi_{+} \rangle = \left(|00 \rangle + |11 \rangle \right)/\sqrt{2}$ as well as particular Pauli measurements performed on the state \cite{MY04}. Other well-known examples include the family of tilted CHSH inequalities \cite{YN12}, the magic square game \cite{WBMS16}, the tripartite Mermin inequality \cite{McK11}, families of $n$-player binary XOR games \cite{MYS12}  as well as a host of other interesting examples with foundational and practical implications \cite{RMPBellnonlocality, SB20}. Self-testing has grown into a highly interesting research field with profound implications for cryptography \cite{ABG07} and quantum computational complexity \cite{NV17}.

The success of these self-testing protocols and practical applications heavily relies on the \emph{measurement independence} assumption: the choice of measurement settings in the Bell scenario is random and independent of the device. While this assumption can be naturally justified in an ideal scenario by the free will of the people performing the experiment to choose the settings, in realistic applications the settings are chosen with random number generators (RNG). 
It is therefore against the spirit of device-independence to trust that these sources will be perfectly random and uncorrelated with the device. 
Indeed, arbitrary correlations between the choice of measurement settings and the state of the device permit to describe all measurement statistics with local hidden variables \cite{B88, Hall11}.

Additional assumptions on the source are therefore necessary if one expects to extend device-independent protocols to Bell scenarios with untrusted sources of randomness. Under the assumption that the outputs of the random number generators cannot be perfectly predicted with the device, we show that it is possible to self-test any partially entangled bipartite pure state. Our assumption is strictly weaker than measurement independence and is satisfied, in particular, in the device-independent protocols for randomness amplification considered in \cite{WBG18, SFR25}.

We finally expose a fundamental distinction between purely probabilistic Bell tests and Hardy-type tests that rely on impossibilities in realistic scenarios: Bell values fail to self-test under arbitrarily weak measurement dependence, whereas Hardy self-tests succeed under arbitrarily weak independence.

\textit{Bell scenarios with untrusted sources.-} A bipartite Bell scenario is an experiment with local input spaces $X$ and $Y$ and output spaces $A$ and $B$, which possibly depend on some hidden variable $\lambda \in \Lambda$. The probabilities of joint events $a,b,x,y$,
\begin{equation}
    p(abxy) = \sum_\lambda p(\lambda) p(xy|\lambda) p(ab|xy\lambda) \, ,
\end{equation}
are obtained by averaging over all (unobserved) hidden variables $\lambda \in \Lambda$.
Under the assumption of \emph{measurement independence}, the input variables do not depend on the hidden variable $p(xy|\lambda) = p(xy)$, and are often assumed to be uniform $p(xy) = 1/|XY|$. Therefore, all information is contained in the conditional distributions $p(ab|xy)$. These are \emph{classical} when they decompose into local distributions $p(ab|xy) = \sum_\lambda p(\lambda) p(a|x\lambda) p(b|y\lambda)$. \emph{Quantum} distributions are obtained as expectation values of local measurement effects under some shared state. That is,
\begin{equation}
p(ab|xy) = \state{}{A_{a|x} \otimes B_{b|y}} = \tr ( \rho A_{a|x} \otimes B_{b|y} )\, ,
\end{equation}
where $\rho$ is a state in a bipartite Hilbert space $H_{A} \otimes H_{B}$, $A_x = (A_{a|x})_{a\in A}$ is a projective measurement on $H_A$ (i.e. $\sum_a A_{a|x} = \id_A$ and $A_{a|x}^2 = A_{a|x}$) and $B_y = (B_{b|y})_{b \in B}$ a projective measurement on $H_B$ for each $x,y$. 

We consider instead a realistic scenario that violates the measurement independence assumption, in which the input variables $x\in X$ and $y \in Y$ are chosen with random number generators $S$ and $T$. In this scenario, depicted in Figure~\ref{fig:causal-diagram}, the joint events have probabilities
\begin{equation}
    p(stabxy) = \sum_\lambda p(\lambda) p(xy|st) p(stab|xy\lambda) \, ,
\end{equation}
where $p(stab|xy\lambda) = p(st|\lambda)p(ab|stxy\lambda)$ can be understood as conditional distributions $p_{st}(ab|xy\lambda)$ subnormalized to $p(st|\lambda)$. Indeed, this setting can be alternatively visualized as a Bell scenario with two additional parties with empty input spaces and output spaces $S$ and $T$, wired to $X$ and $Y$. A distribution is \emph{classical} when it decomposes into local ones, $p(stab|xy) = \sum_{\lambda}p(\lambda) p(st|\lambda)p(a|x\lambda)p(b|y\lambda)$. \emph{Quantum} distributions are those obtained in the four-partite Bell scenario with classical-quantum states $\rho = \sum_{st}|st\rangle \langle st| \otimes \rho_{st}$. That is, with expectation values of local measurement effects on the device under (subnormalized) states of the device classically correlated with the sources,
\begin{equation}
p(stab|xy) = \state{st}{A_{a|x} \otimes B_{b|y}} = \tr(\rho_{st} A_{a|x} \otimes B_{b|y}) \, . 
\end{equation}
 We consider device-independent protocols in which we only have access to \emph{observable} events of the form $p(stab|st)$, in which the settings are chosen from the sources.

We say that the source has \emph{residual randomness} if
\begin{equation}
p(st | abxy) > 0
\label{eq:source}
\end{equation}
for each $s,t,a,b,x,y$. Operationally, this imposes that measurement statistics on the device cannot perfectly discriminate the outputs of the source. In quantum theory, this means that the measurement effects $A_{a|x} \otimes B_{b|y}$ cannot discriminate the states of the device $\rho_{st}$ classically correlated with the source. The assumption that the source has randomness even conditioned upon the quantum state of the device also appears in the Santha-Vazirani condition for boxes \cite{WBG18} or quantum Santha-Vazirani condition \cite{SFR25} considered in the context of randomness amplification.
Notice that the residual randomness assumption is strictly stronger than the condition $p(st|xy) > 0$ often considered in measurement dependence scenarios \cite{PRBLG14}.

An immediate consequence of the residual randomness assumption is that \emph{the impossibility of events is independent of the source}. This condition is strictly weaker than independence for all events, but is enough to extend quantum information protocols based on impossibility beyond the measurement independence scenario. Examples of these protocols include all-versus-nothing proofs of nonlocality \cite{M90} and pseudo-telepathy games \cite{BBT05}, which require stronger forms of nonlocality~\cite{LCZ24} with practical consequences in quantum computing~\cite{BGK18}, quantum complexity theory~\cite{JNV21} and randomness amplification~\cite{FM18}. 

\begin{lemma}
    Let $p(stab|xy)$ be a distribution obtained in a Bell experiment with untrusted sources with residual randomness. Then $p(stab|xy) = 0$ for some $s,t$ if and only if $p(s't'ab|xy) = 0$ for every $s',t'$.
    \label{lemma:impossible}
\end{lemma}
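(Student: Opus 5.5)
The plan is to unfold the residual randomness condition \eqref{eq:source} with Bayes' rule and compare the joint probability of the device outcomes with that of a single source outcome. The lemma concerns the conditional distribution $p(stab|xy)$, where $x,y$ are regarded as inputs to the device. Residual randomness states that $p(st|abxy)>0$ whenever this conditional is defined, i.e.\ whenever $p(ab|xy)>0$, where $p(ab|xy)=\sum_{s',t'}p(s't'ab|xy)$ is the marginal over the source outputs. By Bayes' rule,
\begin{equation}
p(stab|xy) = p(st|abxy)\, p(ab|xy) .
\end{equation}
So, for fixed $a,b,x,y$, each term $p(stab|xy)$ is zero exactly when $p(ab|xy)=0$, because the first factor is strictly positive.

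The two directions then follow directly. \emph{($\Leftarrow$)} If $p(s't'ab|xy)=0$ for every $s',t'$, then in particular it vanishes for the given $s,t$; this direction is trivial. \emph{($\Rightarrow$)} Suppose $p(stab|xy)=0$ for some $s,t$. If $p(ab|xy)>0$, the conditional $p(st|abxy)$ is well defined and equals $p(stab|xy)/p(ab|xy)=0$, contradicting \eqref{eq:source}. Hence $p(ab|xy)=\sum_{s',t'}p(s't'ab|xy)=0$. Since every summand is nonnegative, each $p(s't'ab|xy)$ vanishes.

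In the quantum description the argument reads as follows: $\tr(\rho_{st}A_{a|x}\otimes B_{b|y})=0$ for one $s,t$ forces $\tr(\rho\, A_{a|x}\otimes B_{b|y})=0$ for the averaged device state, and hence the same for every $\rho_{s't'}$. The quantum form adds nothing new; it is the same nonnegativity argument.

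The only delicate step is interpreting \eqref{eq:source} when the conditioning event has probability zero, since then $p(st|abxy)$ is undefined. I will handle this by reading the assumption as "whenever $p(ab|xy)>0$". The contradiction argument above only invokes the assumption in that case, so no further care is needed. I would also note that the joint factor $p(xy)$ cancels, so it does not matter whether one conditions on $xy$ as inputs or as jointly distributed variables.
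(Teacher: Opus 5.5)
Your proof is correct and matches the paper's argument: factor $p(stab|xy)=p(st|abxy)\,p(ab|xy)$, use residual randomness to force $p(ab|xy)=\sum_{s',t'}p(s't'ab|xy)=0$, and conclude from the nonnegativity of each summand. Your reading of the assumption as applying only when $p(ab|xy)>0$ is a worthwhile clarification, because the paper's literal phrasing (``for each $s,t,a,b,x,y$'') would rule out all zero-probability events, including the Hardy zeros the lemma is meant to handle.
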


\begin{proof}
Imagine $0 = p(stab|xy) = p(st|abxy) p(ab|xy)$. The residual randomness assumption $p(st|abxy) > 0$ implies that $0 = p(ab|xy) = \sum_{s't'} p(s't'ab|xy)$. Since each $p(s't'ab|xy)$ is non-negative, $p(s't'ab|xy) = 0$ for each $s',t'$.
\end{proof}

\begin{figure}
    \centering
    \includegraphics[width=0.4\linewidth]{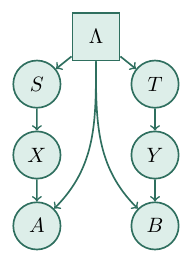}
    \includegraphics[]{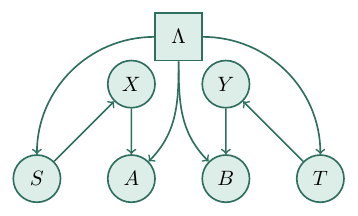}
    \caption{ \textbf{Bell scenario with untrusted sources.} The sources $S$ and $T$ are chosen to pick the setting from the input space $X$ and $Y$, which produce outputs from spaces $A$ and $B$ with no explicit dependence on the sources. The hidden variable $\Lambda$ influences $S, T, A, B$. We can understand $S$ and $T$ as the output spaces of two additional parties in the Bell experiments with no inputs. We consider classical-quantum states $\rho = \sum_{st} |st \rangle \langle st | \otimes \rho_{st}$, which produce probabilities $p(stab|xy) = \state{st}{A_{a|x} \otimes B_{b|y}} = \tr (\rho_{st} A_{a|x} \otimes B_{b|y})$ satisfying $p(st|abxy) > 0$. The protocols only have access to the \emph{observed} probabilities, which have the form $p(stab|st)$.}
    \label{fig:causal-diagram}
\end{figure}

\textit{Quantum distributions.-}
    Quantum distributions are obtained as expectation values of local measurement effects under (subnormalized) states classically correlated with the source, $p(stab|xy) = \state{st}{A_{a|x} \otimes B_{b|y}}$. These local measurement effects can be relaxed with global commuting ones, $p(stab|xy) = \state{st}{A_{a|x} B_{b|y}}$ with $[A_{a|x}, B_{b|y}] = 0$, which enables an algebraic characterization of quantum distributions more amenable to computational implementations \cite{NPA1, NPA2}. Considering unspecified states and unspecified measurement effects amounts to considering arbitrary unital positive linear functionals over the $*$-algebra $\mathcal A = \mathds C \langle A_{a|x}, B_{b|y}; \mathcal R \rangle$, where $\mathcal R$ is the collection of constraints $A_{a|x}^2 = A_{a|x}$, $B_{b|y}^2 = B_{b|y}$, $\sum_{a} A_{a|x} = \id$ and $\sum_{b} B_{b|y} = \id $ for each $a,b,x,y$. That is, the effects are treated as hermitian non-commutative variables subject to algebraic relations.
    A distribution $p(stab|xy)$ is quantum if and only if the following problem is feasible: 
    \begin{align}
        p(stab|xy) = \, &  L_{st}(A_{a|x} B_{b|y}) \, , \\
        & L_{st} \in \mathcal A^* \, , \nonumber \\ 
        & L_{st}(\id) = p(st) \, , \nonumber \\
        & L_{st} \geq 0 \, . \nonumber
    \end{align}
    Here, $\mathcal A^*$ is the space of linear functionals on $\mathcal A$, $\id$ is the identity in $\mathcal A$ and $L_{st} \geq 0$ means that $L_{st}(a^*a) \geq 0$ for each $a \in \mathcal A$.
    In particular, for each finite subspace $V \subset \mathcal A$
    \begin{align}
        p(stab|xy) = \, &  L_{st}(A_{a|x} B_{b|y}) \, , \label{eq:quantum_source_relax}\\
        & L_{st} \in V^* \, , \nonumber \\ 
        & L_{st}(\id) = p(st) \, , \nonumber \\
        & L_{st} \geq 0 \, , \nonumber
    \end{align}
     must be feasible. This is a semidefinite program, which can be solved in polynomial time in the dimension of $V$ \cite{K84, NN94}. If any such program is not feasible, $p(stab|xy)$ is not compatible with quantum theory. Moreover, for each $0 < l \leq u < 1$, maximizing Bell expressions under the linear functionals in Eq.~\eqref{eq:quantum_source_relax} that satisfy the additional linear constraints
    \begin{equation}
        l  \sum_{s't'} p(s't'ab|xy) \leq p(stab|xy) \leq u \sum_{s't'} p(s't'ab|xy)
    \end{equation}
    provides a convergent sequence of upper bounds to the maximal quantum value with untrusted sources whose residual randomness satisfies $ l \leq p(st|abxy) \leq u $.
    
\textit{Self-testing qubit states with untrusted sources.-} 
From now on we consider the Bell scenario depicted in Figure~\ref{fig:causal-diagram}, where the sources satisfy the residual randomness condition $p(st|abxy)>0$ in Eq.~\eqref{eq:source}. We show that for each partially entangled bipartite pure state $|\psi \rangle$ there exist conditions on the distribution $p(stab|xy)$ obtained in a suitable Bell experiment with untrusted sources that can only be satisfied when each $\rho_{st} = p(st) |\psi \rangle \langle \psi |$ up to local isometries. In such a case, the device is certified to be independent from the source, $\rho = \sum_{st} p(st) |st \rangle \langle st | \otimes | \psi \rangle \langle \psi |$.

\begin{proposition}
\label{prop:selftesting-2qub-MI}
Consider the following conditions on the distribution $p(stab|xy)$ obtained in a Bell scenario with two inputs and two outputs, with untrusted sources: 
\begin{align}
p(0101|01) & = 0 \, , \label{eq:HT-zero-1}\\
p(1010|10) & = 0 \, , \label{eq:HT-zero-2}\\
p(1100|11) & = 0 \, , \label{eq:HT-zero-3}\\
p(0000|00) & + wp(0011|00) = p(00)q(w) \, ,\label{eq:HT-violation}
\end{align}
where $q(w)$ is the (normalized) maximal quantum value compatible with the zeroes. These conditions can only be satisfied when each $\rho_{st} = p(st)|\psi_w \rangle \langle \psi_w|$ up to local isometries, where
\begin{equation}
| \psi_w \rangle = \cos \theta_w |00\rangle + \sin \theta_w | 11 \rangle 
\label{eq:tilted-Hardy-state}
\end{equation}
and $(\sin 2\theta_w - 3)^2 = 4w + 5$ for each $-1/4 < w < 1$.
\end{proposition}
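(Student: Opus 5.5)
The plan is to reduce the untrusted-source statement to the standard (measurement-independent) self-testing of the tilted Hardy paradox, and then to lift that self-test to each $\rho_{st}$ separately.

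\textbf{Step 1: transfer the Hardy zeros.} By Lemma~\ref{lemma:impossible}, each observed zero in Eqs.~\eqref{eq:HT-zero-1}--\eqref{eq:HT-zero-3} forces the same event to vanish for every source value. For instance, $p(0101|01)=0$ gives $p(s't'01|01)=0$ for all $s',t'$, i.e.\ $\state{s't'}{A_{0|0}\otimes B_{1|1}}=0$. The same holds for $\state{s't'}{A_{1|1}\otimes B_{0|0}}$ and $\state{s't'}{A_{0|1}\otimes B_{0|1}}$. Hence every normalized state $\rho_{st}/p(st)$ with $p(st)>0$ (residual randomness gives $p(st)>0$) satisfies the three Hardy zero constraints with the common measurements $A_x,B_y$.

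\textbf{Step 2: saturation for the observed source pair.} By definition, $q(w)$ is the maximum of $\state{}{A_{0|0}B_{0|0}}+w\state{}{A_{1|0}B_{1|0}}$ over normalized quantum states and measurements obeying the three zeros. Applied to $\rho_{00}/p(00)$, Step~1 shows that $p(0000|00)+w\,p(0011|00)\le p(00)q(w)$. Equation~\eqref{eq:HT-violation} therefore says that the normalized state $\rho_{00}/p(00)$ attains the maximum.

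\textbf{Step 3: self-test of the extremal Hardy value.} This is the core, and the main obstacle. I would invoke or reprove the self-testing of the tilted Hardy expression in the standard scenario. Using Jordan's lemma, the two binary projective measurements on each side block-diagonalize into qubit blocks. The zero constraints then fix, in each block, the relations between the measurement angles and the state. Optimizing the linear objective over a single qubit pair subject to these constraints gives an explicit function of the angles, whose maximum is $q(w)$. I expect this maximum to be achieved uniquely, up to local unitaries, by $|\psi_w\rangle$ with $(\sin2\theta_w-3)^2=4w+5$, the relation arising from the stationarity condition. The range $-1/4<w<1$ is exactly where this maximizer is a partially entangled state, so uniqueness holds and every block must carry it. The standard swap-isometry argument then yields $\rho_{00}/p(00)\cong|\psi_w\rangle\langle\psi_w|$ together with the fixed block measurements. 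If direct uniqueness proves hard, I would instead use an SOS decomposition of $q(w)\id-(\text{Hardy expression})$ modulo the zero constraints, taken from the NPA dual in Eq.~\eqref{eq:quantum_source_relax}. From it I would extract operator relations such as $(A_{0|0}-\dots)\,|\psi\rangle=0$, from which the isometry is built.

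\textbf{Step 4: extend to all $s,t$.} The self-test in Step~3 determines the measurements $A_x,B_y$ up to the local isometry $\Phi$, on the support of $\rho_{00}$. These measurements are shared by all $\rho_{st}$. For a general $\rho_{st}$, only the zero constraints are known (Step~1), not the Hardy value. I would argue, using the Jordan-block analysis, that in each qubit block the self-tested measurements admit a unique state annihilated by the three Hardy projectors, namely the one-dimensional common kernel spanned by $|\psi_w\rangle$. Any state obeying the zeros must then be supported on this kernel. Applying the same isometry $\Phi$ gives $\Phi(\rho_{st})=p(st)|\psi_w\rangle\langle\psi_w|\otimes\sigma_{st}$.

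The delicate point is that the measurements are pinned only on the blocks supported by $\rho_{00}$. I would handle this by noting that, in any block where $\rho_{st}$ has weight, the three zeros alone force enough structure for the uniqueness argument to apply. Alternatively, I would restrict the claim to the supports certified by Steps~1--3.
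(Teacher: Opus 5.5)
Your Steps 1--3 are the paper's proof: Lemma~\ref{lemma:impossible} transfers the zeros, $\rho_{00}/p(00)$ saturates $q(w)$, and the tilted-Hardy self-test of \cite{ZRLH23} is applied blockwise via Jordan's lemma. Step 4 is the paper's closing argument, that the three zero projectors leave a one-dimensional kernel in each block. You are right that this is the delicate point. The paper passes over it by asserting a one-dimensional feasible subspace ``in each four-dimensional block'', which is justified only on Jordan blocks whose measurements were pinned by $\rho_{00}$.

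Your patch, that in any block where $\rho_{st}$ has weight the zeros alone force uniqueness, is false. The gap cannot be closed, because the conclusion fails for $(s,t)\neq(0,0)$. Here is a counterexample.
\begin{enumerate}
\item Take $H_A=\mathbb{C}^2\oplus\mathbb{C}|e_A\rangle$ and $H_B=\mathbb{C}^2\oplus\mathbb{C}|e_B\rangle$. Use the optimal tilted-Hardy measurements on the qubit blocks.
\item On the extra vectors use deterministic responses: $a=1$ for $x=0$, $a=0$ for $x=1$, $b=0$ for $y=0$, $b=1$ for $y=1$. This strategy respects the three Hardy zeros, so $A_{0|0}\otimes B_{1|1}$, $A_{1|1}\otimes B_{0|0}$ and $A_{0|1}\otimes B_{0|1}$ all annihilate $|e_Ae_B\rangle$.
\item Set $\rho_{00}=p(00)|\psi_w\rangle\langle\psi_w|$, and for $(s,t)\neq(0,0)$ set $\rho_{st}=p(st)\bigl[(1-\epsilon)|\psi_w\rangle\langle\psi_w|+\epsilon|e_Ae_B\rangle\langle e_Ae_B|\bigr]$ with $0<\epsilon<1$.
\end{enumerate}
All four conditions hold. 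Residual randomness also holds. A direct computation shows that in the ideal realization the only vanishing events are the three Hardy zeros. The four events produced by $|e_Ae_B\rangle$ are $(a,b|x,y)=(1,0|0,0),(1,1|0,1),(0,0|1,0),(0,1|1,1)$, and none of them is a Hardy zero. Hence every $(s,t)$ has the same set of possible events. Yet $E_F(\rho_{st}/p(st))\le(1-\epsilon)E(\psi_w)<E(\psi_w)$. Local isometries do not change $E_F$, and appending junk cannot lower it. So $\rho_{st}$ is not $p(st)|\psi_w\rangle\langle\psi_w|$ up to local isometries.

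The one-dimensional block is not essential. A second qubit block with non-optimal Hardy angles works just as well: its unique Hardy state can be chosen nearly product, and it has the same thirteen possible events. The obstruction is structural. For $(s,t)\neq(0,0)$, the data constrain $\rho_{st}$ only through the transferred zeros and single-setting statistics. Neither constrains components of $\rho_{st}$ on Jordan blocks that $\rho_{00}$ does not see, where the measurements are arbitrary.

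Your fallback is therefore what Steps 1--4 actually prove. First, $\rho_{00}$ is self-tested. Second, the compression of each $\rho_{st}$ to the Jordan blocks on which the local marginals of $\rho_{00}$ have weight lies in the span of the blockwise copies of $|\psi_w\rangle$. That compression is mapped by the same isometry to $|\psi_w\rangle\langle\psi_w|\otimes\sigma_{st}$. Getting the statement as written requires an extra hypothesis, for example that the local supports of every $\rho_{st}$ lie within those of $\rho_{00}$.
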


These conditions extend the tilted Hardy tests proposed in \cite{ZRLH23} to the scenario with untrusted sources. Notice that every partially entangled pure state of two qubits can be obtained from Eq.~\eqref{eq:tilted-Hardy-state} for a suitable $-1/4 < w < 1$. Moreover, the maximal quantum violation is explicitly given by
\begin{equation}
q(w) =[(4w+5)^{3/2} - (12w+11)] /(2w+2) \, ,
\end{equation}
and the conditions also self-test the measurements to be a suitable mixture of Pauli measurements \cite[Proposition 1]{ZRLH23}. 

\begin{proof}
Imagine $p(stab|xy) = \state{st}{A_{a|x} \otimes B_{b|y}}$ is a quantum realization satisfying the conditions in Eqs.~\eqref{eq:HT-zero-1}--\eqref{eq:HT-violation}. Lemma~\ref{lemma:impossible} implies that the impossible events are independent of the source. Therefore, for each $s,t$,
\begin{align}
    \state{st}{A_{0|0}\otimes B_{1|1}} & = 0 \, , \label{eq:st-zero-1} \\
    \state{st}{A_{1|1}\otimes B_{0|0}} & = 0 \, , \\
    \state{st}{A_{0|1}\otimes B_{0|1}} & = 0 \, .
    \label{eq:st-zero-3}
\end{align}
Moreover, Eq.~\eqref{eq:HT-violation} provides
\begin{equation}
 \state{00}{A_{0|0} \otimes B_{0|0} + w A_{1|0} \otimes B_{1|0}} = p(00)q(w) \, .
\end{equation}
These conditions already self-test $\rho_{00} = p(00)|\psi_w \rangle \langle \psi_w|$ (and additionally the action of the measurements $A_{a|x}$ and $B_{b|y}$) up to local isometries \cite[Proposition 1]{ZRLH23}. 

To conclude the proof, we show that Eqs.~\eqref{eq:st-zero-1}--\eqref{eq:st-zero-3} imply
\begin{equation}
 \state{st}{A_{0|0} \otimes B_{0|0} + w A_{1|0} \otimes B_{1|0}} = p(st) q(w) \, ,
\end{equation}
which again self-test $\rho_{st} = p(st) |\psi_w \rangle \langle \psi_w |$ for each $s,t$.
Indeed, since the measurements are dichotomic, Jordan lemma provides block-diagonal decompositions into two-dimensional blocks $A_{a|x} = \bigoplus_i A_{a|x}^i$ and $B_{b|y} = \bigoplus_j B_{b|y}^j$. We denote $\rho_{st}^{ij}$ the restriction of the state $\rho_{st}$ to the block where $A^i_{a|x} \otimes B^j_{b|y}$ acts, and $\lambda_{st}^{ij} = \state{st,ij}{\id}$ its normalization. Then,
\begin{align}
    \state{st, ij}{A^i_{0|0}\otimes B^j_{1|1}} & = 0 \, , \label{eq:st-ij-zero-1} \\
    \state{st, ij}{A^i_{1|1}\otimes B^j_{0|0}} & = 0 \, , \label{eq:st-ij-zero-2} \\
    \state{st, ij}{A^i_{0|1}\otimes B^j_{0|1}} & = 0 \, , \label{eq:st-ij-zero-3} \\
    \state{00, ij}{A^i_{0|0}\otimes B^j_{0|0} + w A^i_{1|0}\otimes B^j_{1|0}} & = \lambda_{00}^{ij} q(w) \, .
\end{align}
The last condition follows from the maximality of $q(w)$.
These correlations self-test each block $\rho^{00}_{ij} = \lambda^{00}_{ij} |\psi_w \rangle \langle \psi_w |$, and the measurements $A_{a|x} \otimes B_{b|y}$ up to local unitaries. Now, the three linearly independent zeroes in Eqs.~\eqref{eq:st-ij-zero-1}--\eqref{eq:st-ij-zero-3} have a one-dimensional feasible subspace in each four-dimensional block, which completely determine $\rho_{st}^{ij} = \lambda_{st}^{ij} |\psi_w \rangle \langle \psi_w |$. Then, $\state{st}{A_{0|0}\otimes B_{0|0} + w A_{1|0}\otimes B_{1|0}} = \sum_{ij} \lambda^{st}_{ij} q(w) = p(st) q(w)$ for each $s,t$.
\end{proof}

Remarkably, protocols based on impossibility generally do not work with maximal entanglement. Indeed, pseudo-telepathy games with maximal entanglement correspond with Kochen-Specker proofs of contextuality \cite{RW04}, which do not exist in dimension two.
Moreover, nonlocality based on possibilities in Bell scenarios with two settings or two outcomes automatically implies a Hardy test \cite{MF12}, which is incompatible with maximal entanglement in dimension two. Self-testing of maximal entanglement with pseudo-telepathy games is only known for specific dimensions: with the magic square in dimension $4$ \cite{WBMS16} or the magic pentagram in dimension $8$ \cite{kalev2018rigidity}.

\textit{Self-testing qudit states with untrusted sources.-}
Self-testing of pure bipartite partially entangled states with untrusted sources extends to arbitrary dimension. For this, we first extend the self-testing with Hardy tests to arbitrary dimension. Our argument builds upon the pioneering results~\cite{YN12, CGS17} to combine partial self-tests for two-dimensional substates from the previous section into a consistent self-test for the full state.

\begin{figure}[h]
    \centering  \includegraphics[width=0.8\linewidth]{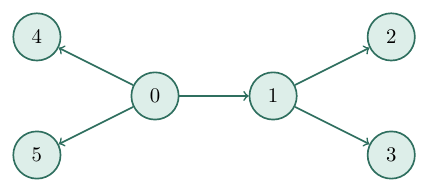}
    \caption{{\bfseries Description of the self-testing protocol} for the state $\ket{\psi} = 1/6 \ket{00} + 1/8 \ket{11} + 1/6 \ket{22} + 1/6\ket{33} + 1/8\ket{44} + 1/4 \ket{55}$. Take the inhomogeneous covering tree rooted at $0$ with edges $(0,1), (0,4), (0,5), (1,2)$ and $(1,3)$. Each party has one measurement with six outcomes $A_0$ and $B_0$. For each edge $(0_i, 1_i)$, say $(0, 4)$, each party has one dichotomic measurement $A_{i,1}$ and $B_{i, 1}$ and the virtual dichotomic measurement $A_{i,0} = (A_{0|0}, A_{4|0})$ and $B_{i,0} = (B_{0|0}, B_{4,|0})$ that is used for a Hardy self-test of the substate $|\psi_i \rangle = c_0|00\rangle + c_4 |44 \rangle$. Namely, the distribution $p_i(ab|xy)$ for these measurements satisfies $p_i(01|01) = p_i(10|10) = p_i(00|11) = 0$ and $p_i(00|00) + w_i p_i(11|00) = p_i q(w_i)$. Here, $p_i = (c_0^2 + c_4^2)$ and $w_i$ is the parameter that self-tests the state with angle $\operatorname{arctan}(c_4/c_0)$.
    }
    \label{fig:covering-tree}
\end{figure}

We say $E = (0_i, 1_i)_{i < d}$ is an \emph{inhomogeneous covering tree} for the state
$\ket{\psi} = \sum_{i<d} c_{i}\ket{ii}$ when $c_{0_i} \neq c_{1_i}$ and every $k < d$ appears in some edge $(0_i, 1_i)$. We say the tree is \emph{rooted} at $k$ if for each $k'$ there are $(k, k_1), \ldots, (k_n, k')\in E$. Every non-maximally entangled state admits a rooted inhomogeneous covering tree: if $c_0 \neq c_1$ take $E = \{(0,k)\}_{c_k \neq c_0} \cup \{(1, k)\}_{c_k = c_0}$ (see Figure~\ref{fig:covering-tree}). Hardy tests over the edges of the covering tree self-test $\ket{\psi}$ up to local isometries. Consider local measurements $A_0$ and $B_0$ with $d$ outcomes, which define virtual local dichotomic measurements $A_{i, 0} = (A_{0_i|0}, A_{1_i|0})$ and $B_{i,0} = (B_{0_i|0}, B_{1_i|0})$ for each $(0_i, 1_i)\in E$. Additionally consider local dichotomic measurements $A_{i, 1}$ and $ B_{i, 1}$ for each $(0_i, 1_i)\in E$. We write $p(ab|x_iy_j) = \state{}{A_{i,a|x} \otimes B_{j, b|y}}$ for the correlations in this Bell experiment with one measurement with $d$ outcomes and $d$ dichotomic measurements. 

Recall the following result \cite[Lemma 2]{CGS17} (c.f. \cite[\S A4a]{YN12}):
\begin{lemma}
    \label{lemma:isometry}
    Let $\ket{\psi} = \sum_{k < d} c_k \ket{kk}$ with $c_k > 0$  be a state and $P_A$ and $P_B$ projective measurements with $d$ outcomes. If there are unitaries $X_A^k$ and $X_B^k$ for each $k < d$ such that
    \begin{align}
        & P_A^k \ket{\psi '} = P_B^k \ket{\psi'} \, , \\
        & X_A^k X_B^k P_B^k \ket{\psi '} = (c_k/c_0) P_A^0 \ket{\psi '} \, , 
    \end{align}
    then $\Phi \ket{\psi'}\otimes \ket{00} = \ket \xi \otimes \ket{\psi}$ for some local isometry $\Phi$.
\end{lemma}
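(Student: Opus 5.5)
The plan is to build the isometry explicitly, in the spirit of the swap isometry of \cite{YN12}. Write $\ket{\psi'}=\sum_{k<d} P_A^k P_B^{k'}\ket{\psi'}$, summing over $k,k'$. By the first hypothesis, $P_A^kP_B^{k'}\ket{\psi'} = P_A^k P_A^{k'}\ket{\psi'}$, which equals $\delta_{kk'}P_A^k\ket{\psi'}$ because the $P_A^k$ are orthogonal projections on Alice's side and commute with Bob's operators. Hence
\[
\ket{\psi'} = \sum_k P_A^k P_B^k \ket{\psi'},
\]
so only diagonal terms survive.

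The isometry $\Phi$ acts on $H_A\otimes H_B$ together with ancillas $\ket{0}_{A'}\ket{0}_{B'}$ of dimension $d$. On Alice's side I would define
\[
\Phi_A : \ket{\phi}\ket{0} \mapsto \sum_k (X_A^k)^\dagger P_A^k\ket{\phi}\otimes\ket{k}_{A'},
\]
and on Bob's side
\[
\Phi_B : \ket{\phi}\ket{0} \mapsto \sum_k (X_B^k)^\dagger P_B^k\ket{\phi}\otimes\ket{k}_{B'}.
\]
Each of these is an isometry, since the $P^k$ form a projective measurement and the $X^k$ are unitary. This is the $d$-outcome analogue of the controlled-Pauli construction. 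If the adjoints end up in the wrong place, the fallback is to apply $X_A^k$ and $X_B^k$ instead. The correct placement is whichever version makes the ``correction'' map each branch $k$ back to a common vector $P_A^0\ket{\psi'}$.

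Applying $\Phi=\Phi_A\otimes\Phi_B$ gives
\[
\sum_{k,k'} (X_A^k)^\dagger (X_B^{k'})^\dagger P_A^k P_B^{k'}\ket{\psi'}\ket{kk'}.
\]
Using the diagonal decomposition, this reduces to
\[
\sum_k (X_A^k X_B^k)^\dagger P_B^k\ket{\psi'}\ket{kk},
\]
where we use that Alice's and Bob's operators commute, so $(X_A^k)^\dagger(X_B^k)^\dagger=(X_A^kX_B^k)^\dagger$. Here the second hypothesis has to be inverted, and I expect this to be the main point requiring care. With the unitary on the other side, the hypothesis says that $X_A^kX_B^kP_B^k\ket{\psi'}$ is proportional to the fixed vector $P_A^0\ket{\psi'}$ with coefficient $c_k/c_0$. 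So I choose the convention in $\Phi$ (unitaries applied directly, not adjointed, i.e.\ $\Phi_A:\ket\phi\ket0\mapsto\sum_k X_A^kP_A^k\ket\phi\ket k$) so that the $k$-th branch becomes exactly $(c_k/c_0)P_A^0\ket{\psi'}\ket{kk}$. Doing so requires $X_A^k$ to act after $P_A^k$ and to commute past $P_B^k$; this holds since they act on different tensor factors. The result is
\[
\Phi\ket{\psi'}\ket{00} = \frac{P_A^0\ket{\psi'}}{c_0}\otimes\sum_k c_k\ket{kk}.
\]

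Finally, set $\ket{\xi}=P_A^0\ket{\psi'}/c_0$. This is automatically normalized, because $\Phi$ is an isometry and $\sum_k c_k^2=1$; alternatively, compute $\|\ket{\psi'}\|^2=\sum_k (c_k/c_0)^2\|P_A^0\ket{\psi'}\|^2$. This yields $\Phi\ket{\psi'}\otimes\ket{00}=\ket{\xi}\otimes\ket{\psi}$. The only genuinely delicate step is consistently ordering the projections and the unitaries so that the hypotheses apply verbatim. The first hypothesis removes the cross terms, and the second collapses every branch onto the common junk state.
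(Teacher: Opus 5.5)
Your proposal is correct and follows the standard argument for this lemma. The paper gives no proof of its own and imports the result from \cite[Lemma 2]{CGS17}, which uses the same construction: the local map $\ket{\phi}\ket{0}\mapsto\sum_k X^k P^k\ket{\phi}\ket{k}$ on each side, with the cross terms $P_A^kP_B^{k'}\ket{\psi'}$ for $k\neq k'$ removed by the first hypothesis, and each diagonal branch $X_A^kX_B^kP_B^k\ket{\psi'}$ collapsed onto $(c_k/c_0)P_A^0\ket{\psi'}$ by the second. In a final write-up, delete the adjointed version of $\Phi_A,\Phi_B$ altogether, because only the unadjointed convention lets the second hypothesis apply verbatim.
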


For each $(0_i, 1_i) \in E$ denote $p_i(ab|xy) = p(ab|x_i y_i)$ the (subnormalized) correlations in the Bell experiments with dichotomic measurements $A_{i,0}, A_{i,1}, B_{i,0}, B_{i,1}$.

\begin{proposition}
\label{prop:self-test-qudit-hardy}
Assume that the correlations $p_i(ab|xy)$ obey
\begin{align}
        & p_i = \sum_{ab} p_i(ab|xy) = c_{0_i}^2 + c_{1_i}^2 \, , \\
        & p_i(01|01) = 0 \, , \\
        & p_i(10|10) = 0 \, , \\
        & p_i(00|11) = 0 \, , \\
        & p_i(00|00) + w_i p_i(11|00) = p_i q(w_i) \, .
    \end{align}
    Here, $q(w_i)$ is the maximal quantum value compatible with the zeroes of the tilted Hardy expression with parameter $w_i$, for $(\sin 2\theta_i - 3)^2 = 4w_i + 5$ and $\tan \theta_i = c_{1_i}/c_{0_i}$. Then, $\rho = |\psi \rangle \langle \psi|$ up to local isometries.
    \end{proposition}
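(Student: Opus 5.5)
The plan is to reduce the qudit statement to the two-qubit Hardy self-test of \cite[Proposition 1]{ZRLH23}, applied edge by edge, and then glue the local conclusions together with Lemma~\ref{lemma:isometry}. Write the (purified) state as $\ket{\psi'}$. For each edge $(0_i,1_i)\in E$, set $\Pi_A^i = A_{0_i|0}+A_{1_i|0}$ and $\Pi_B^i = B_{0_i|0}+B_{1_i|0}$. On the range of these projections the virtual measurements $A_{i,0}$ and $B_{i,0}$ are genuine dichotomic projective measurements. We may assume $A_{i,1}$ and $B_{i,1}$ commute with $\Pi^i$, for instance by replacing them by $\Pi A_{i,1}\Pi$ plus a trivial part, since only the component on the subspace enters the relevant statistics.

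\textbf{Step 1: fix the weight of each substate.} The normalization $p_i = c_{0_i}^2+c_{1_i}^2$ identifies the subnormalized substate $\Pi_A^i\otimes\Pi_B^i\ket{\psi'}$, whose norm is $p_i$. The first thing to check is that the weight is concentrated there, i.e. that $\Pi_A^i\ket{\psi'}=\Pi_B^i\ket{\psi'}$. To get this, I would argue that the outcomes of $A_0$ and $B_0$ are perfectly correlated. The zero conditions together with the maximal value $q(w_i)$ force, through the Jordan block argument used in the proof of Proposition~\ref{prop:selftesting-2qub-MI}, each block to contain the state $\cos\theta_i\ket{00}+\sin\theta_i\ket{11}$. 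In such a block the event ``$A_{i,0}$ has an outcome in the edge while $B_{i,0}$ does not'' must vanish. Summing the weights $p_i$ over the covering tree, with the constraint $\sum_k \langle A_{k|0}\otimes B_{k|0}\rangle \le 1$, should show that $\langle A_{k|0}\otimes B_{k'|0}\rangle=0$ for $k\ne k'$.

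\textbf{Step 2: apply the two-qubit self-test on each edge.} Applying \cite[Proposition 1]{ZRLH23} to the normalized substate $\Pi_A^i\Pi_B^i\ket{\psi'}/\sqrt{p_i}$ shows that it equals $\cos\theta_i\ket{00}+\sin\theta_i\ket{11}$ up to local isometry. The same result also self-tests the measurements as Pauli-type observables on each Jordan block. From these one constructs, on the edge subspace, local unitaries $X_A^{(i)}$ and $X_B^{(i)}$ (Pauli $X$ built from $A_{i,1}$, $B_{i,1}$ and $A_{i,0}$, $B_{i,0}$, as in \cite{YN12}) that exchange the two outcomes. These satisfy
\begin{equation}
X_A^{(i)}X_B^{(i)}P_B^{1_i}\ket{\psi'}=\frac{c_{1_i}}{c_{0_i}}P_A^{0_i}\ket{\psi'}\,.
\end{equation}
This relation is where $\tan\theta_i=c_{1_i}/c_{0_i}$ enters. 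Outside the edge subspace, extend $X^{(i)}$ by the identity, so that it is unitary on the whole space.

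\textbf{Step 3: propagate along the tree and conclude.} Since the tree is rooted at $0$, for each $k$ we compose the edge unitaries along the path $0\to k_1\to\cdots\to k$. Setting $X_A^k = X_A^{(i_1)}\cdots X_A^{(i_n)}$ and similarly for $B$, the ratios telescope, giving $X_A^kX_B^kP_B^k\ket{\psi'}=(c_k/c_0)P_A^0\ket{\psi'}$. The path composition is valid because each factor acts only on the relevant two-outcome subspace and maps the $P^{k_j}$ part to the $P^{k_{j-1}}$ part. Together with $P_A^k\ket{\psi'}=P_B^k\ket{\psi'}$ from Step 1, Lemma~\ref{lemma:isometry} with $P_A=A_0$ and $P_B=B_0$ yields the local isometry.

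\textbf{Main obstacle.} I expect the main difficulty to be Step 1 and the consistency of Step 2. The Hardy test only controls the subnormalized restriction to each edge. Passing to the global perfect correlation $P_A^k\ket{\psi'}=P_B^k\ket{\psi'}$, and making the extracted unitaries agree on vertices shared by several edges, requires careful bookkeeping. My first attempt would be to show that the total weight $\sum_k\|P_A^kP_B^k\psi'\|^2$ equals $1$, using the fact that each $c_k^2$ is fixed by the ratios along the tree together with $\sum_k c_k^2=1$.
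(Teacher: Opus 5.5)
Your proposal is correct and follows the paper's proof essentially step for step. The paper also runs a Jordan-block tilted-Hardy self-test on each edge, builds per-edge unitaries flipping the two outcomes in each block with $U_A^iU_B^iP_B^{a_i}\ket{\psi'} = (c_{a_i}/c_{\bar a_i})P_A^{\bar a_i}\ket{\psi'}$, composes them along the path from the root so that the ratios telescope to $c_k/c_0$, and concludes with Lemma~\ref{lemma:isometry} taking $P_A = A_0$, $P_B = B_0$.

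Your Step~1 adds something the paper uses without comment: each edge fixes $\langle A_{k|0}\otimes B_{k|0}\rangle = c_k^2$, and $\sum_k c_k^2 = 1$ then forces all cross terms to vanish, which gives the perfect correlation $P_A^k\ket{\psi'} = P_B^k\ket{\psi'}$. With that in hand, your worry about making the unitaries agree across edges disappears, because each edge relation is stated on the global $\ket{\psi'}$.
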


    \begin{proof}
    Imagine that $\ket{\psi'}$ and $A_{i,0}, A_{i, 1}, B_{i, 0}, B_{i, 1}$ provide a quantum realization. As previously argued, these conditions force that $\ket{\psi'_{ii}} = \sum_{nm} \sqrt{\state{ii,nm}{\id_{ii}}} \ket{\psi_{ii}}_{nm}$ for a certain decomposition of the space into two-dimensional blocks, where $\sqrt {p_i} \ket{\psi_{ii}}_{nm} = c_{0_i}\ket{0 0}_{nm} + c_{1_i} \ket{1 1}_{nm}$ and $\ket{\psi'_{ii}}$ is the component of $\ket{\psi'}$ in the space $H_{0_i} \oplus H_{1_i}$. For $k < d$ take $(0_i, 1_i) \in E$ such that $a_i =k$. For $P_A^k = A_{a_i | 0}$ and $P_B^k = B_{a_i|0}$ we find that
    \begin{align}
        A_{a_i|0} \ket{\psi'} =  \sum_{nm}\sqrt{\state{ii,nm}{\id}} c_{a_i} \ket{aa}_{nm} = B_{a_i|0} \ket{\psi'} \, .
    \end{align}
    Where we have used $\state{ii,nm}{\id_{ii}} = \state{ii,nm}{\id} p_{i}$. For each $(0_i, 1_i) \in E$ denote $U_A^i$ and $U_B^i$ the local unitaries flipping each block in the decomposition above, $\ket{a}_n \mapsto \ket{\bar a}_n $. Now let $0$ be the root of $E$. For $k < d$ take (the unique) $(0, k_1), \ldots, (k_t, k) \in E$ and define $X_A^k = U_A^0 \ldots U_A^t$.
    Notice
    \begin{align}
        U_A^i U_B^i P_B^{a_i} \ket {\psi'} =  \sum_{nm}\sqrt{\state{ii,nm}{\id}} c_{a_i} \ket{\bar a\bar a}_{nm} = \frac{c_{a_i}}{c_{\bar a_i}}P_A^{\bar a_i} \ket{\psi'} \, .
    \end{align}
    Therefore,
    \begin{align}
        X_A^k X_B^k P_B^{k} \ket {\psi'} = \frac{c_{k_1}}{c_{0}} \ldots \frac{c_{k}}{c_{k_t}}P_A^{0} \ket{\psi'} = \frac{c_{k}}{c_{0}} P_A^{0} \ket{\psi'}\, .
    \end{align}
    Lemma~\ref{lemma:isometry} provides the isometry $\Phi\ket{\psi'} \otimes \ket{00} = \ket{\xi} \otimes \ket{\psi}$.
    \end{proof}

Now consider the scenario with untrusted sources with residual randomness. That is, there is a classical-quantum state $\rho = \sum_{st}|st\rangle \langle st | \otimes \rho_{st}$ correlating the sources and device. For each $(0_i, 1_i)\in E$ there are now distributions $p_i(stab|xy) = \state{st}{A_{a|x_i} \otimes B_{b|y_i}}$ for each $s,t$. Conditions in Proposition~\ref{prop:self-test-qudit-hardy} for $p(st)p_i(ab|st) = p_i(stab|st)$ still self-test the state of the device up to local isometries.

\begin{proposition}
\label{prop:self-test-qudit-hardy-source}
Assume that the correlations $p_i(stab|xy)$ obey
\begin{align}
        & p_i(st) = \sum_{ab} p(st)p_i(stab|st) = p(st)(c_{0_i}^2 + c_{1_i}^2) \, , \label{eq:HT-us-norm}\\
        & p_i(0101|01) = 0 \, , \label{eq:HT-us-zero-1} \\
        & p_i(1010|10) = 0 \, , \label{eq:HT-us-zero-2} \\
        & p_i(1100|11) = 0 \, , \label{eq:HT-us-zero-3} \\
        & p_i(0000|00) + w_i p_i(0011|00) = p_i(00) q(w_i) \, . \label{eq:HT-us-violation}
    \end{align}
    Here, $q(w_i)$ is the maximal quantum value of the tilted Hardy test with parameter $w_i$, for $(\sin 2\theta_i - 3)^2 = 4w_i + 5$ and $\tan \theta_i = c_{1_i}/c_{0_i}$. These conditions can only be satisfied when $\rho_{st} = p(st)|\psi \rangle \langle \psi|$ up to local isometries for each $s,t$.
    \end{proposition}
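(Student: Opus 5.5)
The plan combines Proposition~\ref{prop:selftesting-2qub-MI}, applied edge by edge, with Proposition~\ref{prop:self-test-qudit-hardy}. Fix a quantum realization $\rho=\sum_{st}|st\rangle\langle st|\otimes\rho_{st}$ with measurements $A_0,B_0$ ($d$ outcomes) and $A_{i,1},B_{i,1}$ for each edge $(0_i,1_i)\in E$. First I purify: by enlarging $H_A$ with an ancilla I may assume each normalized $\rho_{st}/p(st)$ is a pure state $\ket{\psi'_{st}}$, without changing the statistics. A local isometry acting on the purification then acts on $\rho_{st}$ after tracing out the ancilla, which gives the claimed form.

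\textbf{Step 1 (zeroes are source-independent).} For each edge $i$, apply Lemma~\ref{lemma:impossible} to Eqs.~\eqref{eq:HT-us-zero-1}--\eqref{eq:HT-us-zero-3}. This gives, for every $s,t$, the three Hardy zeroes $\state{st}{A_{i,0|0}\otimes B_{i,1|1}}=\state{st}{A_{i,1|1}\otimes B_{i,0|0}}=\state{st}{A_{i,0|1}\otimes B_{i,0|1}}=0$. One subtlety must be checked here. The virtual effects $A_{i,a|0}=A_{a_i|0}$ (with $a_i\in\{0_i,1_i\}$) do not sum to the identity, so the dichotomic test lives on the subspace $H_{0_i}\oplus H_{1_i}$. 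This is why Eq.~\eqref{eq:HT-us-norm} is imposed. I will treat each edge's experiment as a Hardy experiment on the compressed state, i.e.\ the component of $\ket{\psi'_{st}}$ in that subspace, with total weight $p(st)p_i$.

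\textbf{Step 2 (each edge and each $(s,t)$ is rigid).} Compress to each edge. The proof of Proposition~\ref{prop:selftesting-2qub-MI} then applies verbatim. Eq.~\eqref{eq:HT-us-violation} together with the Jordan decomposition into $2\times2$ blocks and the maximality of $q(w_i)$ self-tests the $(0,0)$ component blockwise. The three linearly independent zeroes leave a one-dimensional feasible subspace in each four-dimensional block, so every $\rho_{st}$ restricted to $H_{0_i}\oplus H_{1_i}$ equals $\sum_{nm}\lambda^{nm}_{st}\,|\psi_{ii}\rangle\langle\psi_{ii}|_{nm}$, with the block structure \emph{fixed by the measurements and independent of $s,t$}. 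Hence for every $s,t$ the tilted Hardy value equals $p(st)p_iq(w_i)$, and the hypotheses of Proposition~\ref{prop:self-test-qudit-hardy} hold for the normalized state $\ket{\psi'_{st}}$.

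\textbf{Step 3 (glue along the tree).} Apply Proposition~\ref{prop:self-test-qudit-hardy} to each $\ket{\psi'_{st}}$. The relations $P^k_A\ket{\psi'_{st}}=P^k_B\ket{\psi'_{st}}$ and $X^k_AX^k_BP^k_B\ket{\psi'_{st}}=(c_k/c_0)P^0_A\ket{\psi'_{st}}$ hold with unitaries $U^i_A,U^i_B$ built only from the Jordan blocks of the measurements. These are therefore the same for all $s,t$, and so is the isometry $\Phi$ from Lemma~\ref{lemma:isometry}, since $\Phi$ is constructed from $P^k$ and $X^k$ alone. Thus $\Phi\ket{\psi'_{st}}\otimes\ket{00}=\ket{\xi_{st}}\otimes\ket\psi$ with a single $\Phi$, which gives $\rho_{st}=p(st)|\psi\rangle\langle\psi|$ up to this common local isometry, tensored with a junk state.

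I expect the main obstacle to be Step 2. Making the compression to $H_{0_i}\oplus H_{1_i}$ rigorous means checking that the Jordan-block argument survives when $A_{i,0}$ is only a partial measurement. I would handle this by completing it on the complement, e.g.\ by adding the remaining outcomes of $A_0$ to the outcome $1$, and noting that the zeroes and the Hardy value only involve the compressed state. I would also need to verify carefully that the block decomposition, and hence the isometry, is uniform in $s,t$.
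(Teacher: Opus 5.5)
\textbf{The gap is the transfer in your Step~2.} Your outline follows the paper's own argument: Lemma~\ref{lemma:impossible} to spread the three zeroes to all $(s,t)$, a blockwise self-test of $\rho_{00}$, the one-dimensional kernel of the zeroes in each Jordan block, and then Proposition~\ref{prop:self-test-qudit-hardy}. The paper's proof makes the same move, so this is not something you introduced.

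The $(0,0)$ violation certifies the measurements only on Jordan blocks $(n,m)$ with $\lambda^{nm}_{00}>0$. Residual randomness, via Lemma~\ref{lemma:impossible}, only says that all $\rho_{st}$ share the same \emph{zero pattern}. It does not confine the support of $\rho_{st}$ to the blocks seen by $\rho_{00}$. On the other blocks, the kernel of the three zeroes is spanned by whatever Hardy state those uncertified measurements admit, not by $|\psi_{ii}\rangle$. The fact that the block decomposition is the same for all $(s,t)$ does not help, because the angles inside unseen blocks are free.

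The conclusion already fails for $d=2$. Take $H_A=H_B=\mathbb{C}^2\oplus\mathbb{C}^2$, with $A_{a|x}=A^{(1)}_{a|x}\oplus A^{(2)}_{a|x}$ and $B_{b|y}=B^{(1)}_{b|y}\oplus B^{(2)}_{b|y}$.
\begin{itemize}
\item Block $1$ carries the optimal tilted-Hardy configuration for $w$.
\item Block $2$ carries Hardy's configuration for $|\phi\rangle=\alpha|01\rangle+\beta|10\rangle+\gamma|11\rangle$, with $\alpha\beta\gamma\neq0$ and $\alpha\beta\neq\cos\theta_w\sin\theta_w$. Settings $x=y=1$ measure in the computational basis (outcome $0$ on $|0\rangle$). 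Settings $x=y=0$ have outcome $0$ projecting onto $\gamma|0\rangle-\alpha|1\rangle$ for Alice and onto $\gamma|0\rangle-\beta|1\rangle$ for Bob.
\end{itemize}
A direct computation shows that every such two-qubit configuration has exactly the three Hardy zeroes, with the other $13$ probabilities strictly positive. The optimal configuration for $w$ is of this type up to local unitaries.

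Now let $\rho_{00},\rho_{10},\rho_{11}$ be proportional to $|\psi_w\rangle\langle\psi_w|$ on block $(1,1)$, and let $\rho_{01}=p(01)|\phi\rangle\langle\phi|$ on block $(2,2)$. All zero patterns coincide, so residual randomness holds, and Eqs.~\eqref{eq:HT-us-norm}--\eqref{eq:HT-us-violation} are satisfied. But $\rho_{01}$ has the wrong Schmidt coefficients, so it is not $p(01)|\psi_w\rangle\langle\psi_w|$ up to any local isometry.

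\textbf{What is missing.} The statement does not follow from residual randomness of the statistics alone. You need an assumption on the states themselves, e.g.\ $\operatorname{supp}\rho_{st}\subseteq\operatorname{supp}\rho_{00}$. A bound such as $\rho_{00}\geq\epsilon\sum_{s't'}\rho_{s't'}$, in the spirit of the Santha--Vazirani-type conditions the paper cites, would give this. Under it, $\lambda^{nm}_{00}=0$ forces $\lambda^{nm}_{st}=0$, so every block seen by $\rho_{st}$ carries the certified measurements. Your kernel argument then works; coherences between blocks are harmless because the isometry is linear.

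\textbf{Two smaller points.}
\begin{itemize}
\item In Step~3 you rely on $P^k_A|\psi'\rangle=P^k_B|\psi'\rangle$ with $P^k=A_{k|0}$, taken from the proof of Proposition~\ref{prop:self-test-qudit-hardy}. This relation is incompatible with the Hardy zeroes. Set $\varphi:=(A_{0_i|0}\otimes\id)|\psi'\rangle=(\id\otimes B_{0_i|0})|\psi'\rangle$. The first two zeroes give $(A_{i,0|1}\otimes B_{i,0|1})\varphi=\varphi$, hence $p_i(00|11)\geq\|\varphi\|^2=p_i(00|00)$. The third zero then forces the Hardy probability to vanish, and $p_iq(w_i)$ cannot be reached. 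Your claim that the isometry is uniform in $(s,t)$ therefore needs another justification, although the statement itself allows a separate isometry for each $(s,t)$.
\item The compression problem you flag is not fixed by padding. $A_{i,1}$ need not commute with $A_{0_i|0}+A_{1_i|0}$, so the compressed measurement is not projective. Completing $A_{i,0}$ to a full measurement changes both the normalization and the tilted term.
\end{itemize}
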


    The proof follows the argument in Proposition~\ref{prop:selftesting-2qub-MI}, now using the generalized Hardy self-tests from Proposition~\ref{prop:self-test-qudit-hardy}.
    \begin{proof}
    Imagine that $p_i(stab|xy) = \state{st}{A_{a|x_i} \otimes B_{b|y_i}}$ is a quantum realization satisfying the conditions in Eqs.~\eqref{eq:HT-us-norm}--\eqref{eq:HT-us-violation}. Lemma~\ref{lemma:impossible} implies that the impossible events are independent of the source. Therefore, for each $s,t$,
        \begin{align}
            \state{st}{A_{0|0_i} \otimes B_{1|1_i}} & = 0 \, , \\
            \state{st}{A_{1|1_i} \otimes B_{0|0_i}} & = 0 \, , \\
            \state{st}{A_{0|1_i} \otimes B_{0|1_i}} & = 0 \, .
        \end{align}
        Moreover, Eq.~\eqref{eq:HT-us-violation} provides
    \begin{equation}
     \state{00}{A_{0|0_i} \otimes B_{0|0_i} + w_i A_{1|0_i} \otimes B_{1|0_i}} = p_i(00) q(w_i) \, .
    \end{equation}
    It follows from Proposition~\ref{prop:self-test-qudit-hardy} that $\rho_{00} = p(00)|\psi \rangle \langle \psi |$ up to local isometries. The dichotomic measurements $(A_{0_i}, A_{1_i})$ and $(B_{0_i}, B_{1_i})$ admit block-diagonal decompositions. Since there is a one dimensional space satisfying the three zeros in Eqs.~\eqref{eq:HT-us-zero-1}--\eqref{eq:HT-us-zero-3} for each four dimensional block, the marginal state on each block must be proportional to the corresponding block of $\rho_{00}$, which achieves the maximal value $q(w_i)$ up to the normalization of $\rho_{st}$. Therefore, for each $s,t$,
    \begin{equation}
     \state{st}{A_{0|0_i} \otimes B_{0|0_i} + w_i A_{1|0_i} \otimes B_{1|0_i}} = p_i(st) q(w_i) \, .
    \end{equation}
    It then follows from Proposition~\ref{prop:self-test-qudit-hardy} that $\rho_{st} = p(st) |\psi \rangle \langle \psi |$ up to local isometries for each $s,t$.
    \end{proof}

    We can summarize our main contribution as follows.

    \begin{theorem}
        Let $\ket{\psi} = \sum_{i<d} c_{i}\ket{ii}$ with $c_i > 0$ be a pure partially entangled state. There are conditions on a Bell scenario with one measurement with $d$ outcomes and $d-1$ measurements with $2$ outcomes per party that self-test $\ket{\psi}$ up to local isometries for any untrusted source with residual randomness.
    \end{theorem}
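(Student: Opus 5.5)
The theorem is essentially a packaging of Proposition~\ref{prop:self-test-qudit-hardy-source}. The plan is to build the Bell scenario explicitly from the state and then invoke that proposition.

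First, fix a rooted inhomogeneous covering tree $E=(0_i,1_i)_{i<d-1}$ for $\ket{\psi}$, rooted at $0$. Since $\ket{\psi}$ is partially entangled, not all $c_k$ are equal. After relabelling the Schmidt basis we may assume $c_0\neq c_1$, and then take the explicit tree $E=\{(0,k)\}_{c_k\neq c_0}\cup\{(1,k)\}_{c_k=c_0}$. This tree is rooted at $0$, because $(0,1)\in E$ connects $1$ to the root, and it covers every $k<d$. A tree on $d$ vertices has exactly $d-1$ edges. This edge count is what produces the counting in the statement: for each party, one $d$-outcome measurement $A_0$ (respectively $B_0$) together with one dichotomic measurement $A_{i,1}$ (respectively $B_{i,1}$) per edge, i.e.\ $d-1$ dichotomic measurements.

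For each edge we use the virtual dichotomic measurements $A_{i,0}=(A_{0_i|0},A_{1_i|0})$ and $B_{i,0}=(B_{0_i|0},B_{1_i|0})$. We then impose on the observed correlations $p_i(stab|st)$ the conditions of Eqs.~\eqref{eq:HT-us-norm}--\eqref{eq:HT-us-violation}, with $\tan\theta_i=c_{1_i}/c_{0_i}$ and $w_i$ determined by $(\sin2\theta_i-3)^2=4w_i+5$. Inhomogeneity of each edge, $c_{0_i}\neq c_{1_i}$, gives $\theta_i\neq\pi/4$. Hence $\sin 2\theta_i<1$, so $w_i$ is well defined and lies in $(-1/4,1)$, where the tilted Hardy self-test of Proposition~\ref{prop:selftesting-2qub-MI} applies.

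Next we check that these conditions are actually satisfiable, so the self-test is not vacuous. The honest realization takes $\rho_{st}=p(st)\ket{\psi}\bra{\psi}$, which trivially has residual randomness $p(st|abxy)=p(st)>0$ provided every $p(st)>0$. On each two-dimensional block $\mathrm{span}\{\ket{0_i},\ket{1_i}\}$ it uses the optimal tilted Hardy measurements of \cite[Proposition 1]{ZRLH23} for the normalized substate $c_{0_i}\ket{0_i0_i}+c_{1_i}\ket{1_i1_i}$, and lets $A_0,B_0$ be the computational-basis measurement. The one step needing care is that $A_{i,1}$ must be a measurement on the whole space. It should act as the Hardy measurement on block $i$ and be extended arbitrarily, say by outcome $0$, on the complement. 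We must then verify that the complement does not spoil the zeros, i.e.\ that contributions from outside $H_{0_i}\oplus H_{1_i}$ do not enter $p_i$. Because $p_i(ab|xy)$ is defined with the virtual measurement $A_{i,0}$, whose effects vanish outside the block, I expect this to need a short check that the zeros involving only $A_{i,1},B_{i,1}$ (the event $p_i(00|11)$) are unaffected. If they are affected, the extension on the complement should be chosen so that $A_{i,1}$ and $B_{i,1}$ give outcomes that avoid $(0,0)$ there, for instance $A_{i,1}$ outputs $0$ and $B_{i,1}$ outputs $1$. Making this precise is the main obstacle. The normalization $p_i(st)=p(st)(c_{0_i}^2+c_{1_i}^2)$ then holds by construction.

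Finally, soundness follows directly. Proposition~\ref{prop:self-test-qudit-hardy-source} states that any quantum realization with residual randomness satisfying these conditions has $\rho_{st}=p(st)\ket{\psi}\bra{\psi}$ up to local isometries for every $s,t$. In particular $\rho=\sum_{st}p(st)\ket{st}\bra{st}\otimes\ket{\psi}\bra{\psi}$, which is the claimed self-test of $\ket{\psi}$ for any untrusted source with residual randomness.
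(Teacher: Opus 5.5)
Your soundness argument is exactly how the paper obtains the theorem: you fix the covering tree, count its $d-1$ edges, get $w_i\in(-1/4,1)$ from $c_{0_i}\neq c_{1_i}$, and invoke Proposition~\ref{prop:self-test-qudit-hardy-source}. The paper gives nothing beyond that appeal. The trouble is the non-vacuity step you add. It fails, and not where you expect.

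If $A_0,B_0$ are computational-basis measurements, then $(A_{0_i|0}\otimes\id)\ket{\psi}=c_{0_i}\ket{0_i0_i}=(\id\otimes B_{0_i|0})\ket{\psi}$. Hence $p_i(01|01)=c_{0_i}^2\|B_{i,1|1}\ket{0_i}\|^2=0$ forces $B_{i,0|1}\ket{0_i}=\ket{0_i}$. Likewise $p_i(10|10)=0$ forces $A_{i,0|1}\ket{0_i}=\ket{0_i}$. Then $p_i(00|11)\ge|\langle 0_i0_i|A_{i,0|1}\otimes B_{i,0|1}\ket{\psi}|^2=c_{0_i}^2>0$, however you extend $A_{i,1},B_{i,1}$ off the block. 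So ``$A_0,B_0$ in the computational basis'' and ``the optimal tilted Hardy measurements on each block'' are mutually incompatible: a realization attaining $q(w_i)$ never has its $x=0$ setting diagonal in the Schmidt basis. The normalization is not automatic either. At $x=y=1$, genuine dichotomic measurements give total weight $p(st)$, not $p(st)(c_{0_i}^2+c_{1_i}^2)$.

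This is not a local choice you can repair. Suppose only that $(A_{0_i|0}\otimes\id)\ket{\psi'}=(\id\otimes B_{0_i|0})\ket{\psi'}$. This is the hypothesis of Lemma~\ref{lemma:isometry} that the proof of Proposition~\ref{prop:self-test-qudit-hardy} extracts from the Hardy tests. Then $\ket{\chi}=(A_{0_i|0}\otimes B_{0_i|0})\ket{\psi'}$ equals both $(A_{0_i|0}\otimes\id)\ket{\psi'}$ and $(\id\otimes B_{0_i|0})\ket{\psi'}$. The first two zeros give $(A_{i,0|1}\otimes B_{i,0|1})\ket{\chi}=\ket{\chi}$, and hence $p_i(00|11)\ge|\langle\chi|\psi'\rangle|^2/\|\chi\|^2=p_i(00|00)$. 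The third zero then forces $p_i(00|00)=0$. The tilted value is therefore at most $\max(w_i,0)\,p_i<q(w_i)\,p_i$, because $q(w)-w=\sin^3 2\theta_w/\bigl(4(2-\sin 2\theta_w)\bigr)>0$ and $q(w)>0$.

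Thus any realization meeting all the conditions violates the very identity on which the self-testing mechanism rests. Non-vacuity cannot be secured by a cleverer honest realization within this scheme. The paper never addresses non-vacuity, so on this point you cannot lean on it.
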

    
Additional structure on the covering tree $E$ may reduce the number of measurement settings in the self-testing protocol. Notice that sets of dichotomic measurements $A_{1_{i_1}}, \ldots, A_{1_{i_k}}$ and $B_{1_{i_1}}, \ldots, B_{1_{i_k}}$ corresponding to non-overlapping edges $(0_{i_1}, 1_{i_1}), \ldots, (0_{i_k}, 1_{i_k}) \in E$ can be obtained from single measurements $A_{1}$ and $B_{1}$ with $2k$ outcomes. In the generic case where all Schmidt coefficients of $\ket{\psi}$ are distinct, two such sets of measurements with $d$ outcomes are enough, as in \cite[Lemma 2]{CGS17} (c.f. \cite[\S A4a]{YN12}). 

\textit{Self-testing with Bell values.-}
The presence of untrusted sources in Bell experiments manifests a fundamental difference between Bell inequalities and Hardy tests already in the scenario with two settings and two outcomes: while Hardy self-tests succeed with arbitrary residual randomness $l \leq p(st|abxy) \leq u$ for any $0 < l \leq u < 1$ (not full dependence), Bell values alone fail to self-test any state for $l < 1/4 < u$ (beyond the independence assumption) .

To see this, notice that a CHSH value $2\sqrt 2$ does not self-test in the scenario with untrusted sources:
\begin{equation}
    \state{00}{A_0 \otimes B_0} + \state{01}{A_0 \otimes B_1} + \state{10}{A_1 \otimes B_0} - 
    \state{11}{A_1 \otimes B_1} \, .
    \label{eq:CHSH-sources}
\end{equation}
Take measurements $A_0 = X$, $A_1 = Z$, $B_0 = (X + Z)/\sqrt 2$ and $B_1 = (X - Z)/\sqrt 2$, and states $\rho_{00} = \rho_{01}$ and $\rho_{10} = \rho_{11}$ with normalization $\state{st}{\id} = 1/4$. The value of Eq.~\eqref{eq:CHSH-sources} is 
$(\state{00}{X\otimes X} + \state{11}{Z\otimes Z})\sqrt 2$, which achieves $1/\sqrt 2$ (the normalized CHSH value) if and only if $\state{00}{X \otimes X} = 1$ and $\state{11}{Z\otimes Z} = 1$. In particular, these conditions are satisfied with the maximally entangled state, for which there are no impossible events, $p(stab|xy) = (2\pm \sqrt 2)/8$. Therefore, there exist by continuity $\rho_{00} \neq \rho_{11}$ achieving the maximal value with residual randomness $l \leq p(st|abxy) \leq u$ for arbitrary $l < 1/4 < u$ (beyond the independence assumption). A CHSH value above $2\sqrt 2$ 
can only be obtained when the subnormalized states $\rho_{st}$ are not identical, which already detects correlations with the source. 

\textit{Conclusions.-} Self-testing of any bipartite pure partially entangled state is possible beyond the measurement independence assumption. In particular, in realistic scenarios in which the settings in a Bell experiment are chosen with randomness sources, under the assumption that the behaviour of the source cannot be predicted from statistics on the device. 
Remarkably, high degeneracies in maximally entangled states turn out to be a burden for self-testing with possibilistic tests, rather than an advantage as with probabilistic tests in the measurement independent scenario. Maximal entanglement in dimension two cannot be self-tested with Hardy tests. While maximal entanglement in higher dimensions does exhibit possibilistic nonlocality, nonlocality alone does not guarantee self-testing: apart from specific cases where self-testing is possible with magic square or magic pentagram games, the general question remains open. Another natural question that follows from our work is whether robust self-testing is possible with untrusted randomness sources. These questions and a detailed comparison with the scenario of measurement dependent locality will be addressed in forthcoming work.

\textit{Acknowledgments.-}
We thank Yuan Liu, Liu Mingxuan, Stefano Pironio, Valerio Scarani and Peter Sidajaya for meaningful discussions.
We acknowledge support from the General Research Fund (GRF) Grant No.\ 17211122, and the Research Impact Fund (RIF) Grant No.\ R7035-21.	

\bibliographystyle{apsrev4-2}
\bibliography{references}

\onecolumngrid

\appendix
\newpage

\end{document}